\documentclass[11pt,a4paper]{article}

\usepackage[T1]{fontenc}
\usepackage[utf8]{inputenc}
\usepackage{lmodern}
\usepackage[margin=2.5cm]{geometry}
\usepackage{amsmath,amssymb,amsthm}
\usepackage{graphicx}
\usepackage{booktabs}
\usepackage{multirow}
\usepackage{array}
\usepackage{algorithm}
\usepackage{algpseudocode}
\usepackage{xcolor}
\usepackage{url}
\usepackage{xurl}
\usepackage[hidelinks]{hyperref}
\usepackage{authblk}
\usepackage[section]{placeins}
\usepackage{setspace}
\usepackage{etoolbox}

\AtBeginEnvironment{algorithmic}{\setstretch{1.35}}

\newtheorem{lemma}{Lemma}

\newcommand{\fl}{\mathrm{fl}}
\newcommand{\rint}{\operatorname{rint}}
\newcommand{\sym}{\operatorname{sym}}

\title{Emulating DGEMM and ZGEMM with INT8 Matrix Engines on Intel Arc GPUs:\\
Ozaki Scheme II on the Xe2 Architecture}

\author[1]{Rostislavs Rostovskis\thanks{Corresponding author: \texttt{rostislavs.rostovskis@cfi.lu.lv}}}
\author[1]{Aleksandrs Platonenko}
\affil[1]{Institute of Solid State Physics, University of Latvia, Riga, Latvia}
\date{\today}

\begin{document}
\maketitle

\begin{abstract}
Double-precision matrix multiplication, real (DGEMM) and complex (ZGEMM), dominates
the run time of dense linear algebra and electronic-structure codes. On affordable
GPUs, however, FP64 throughput is a small and shrinking fraction of the
throughput of the low-precision matrix engines. Ozaki scheme~II closes this gap by reducing a floating-point matrix product
to a handful of exact INT8 products followed by a Chinese-remainder reconstruction;
the number of products sets the accuracy. We present what is, to our knowledge, the first
implementation of Ozaki scheme~II for Intel GPUs. Written in SYCL on top of oneMKL, it runs entirely
on the device and exposes the full BLAS \texttt{dgemm}/\texttt{zgemm} interface,
including transposition, conjugation, $\alpha$, $\beta$ and arbitrary problem
sizes through memory-aware tiling. Making the scheme fast on the Xe2 (Battlemage)
architecture required keeping FP64 arithmetic out of all memory-bound kernels, a
branch-free residue computation, an FP32 evaluation of the CRT quotient with a
proven safety margin, 16-bit limb arithmetic in the reconstruction, and
work-arounds for performance cliffs of the vendor INT8 GEMM.
On an Intel Arc Pro B60 (24\,GB) the emulated DGEMM reaches 17.3, 12.1 and
9.3\,TFLOP/s with $s=8$, 12 and 16 moduli at $N=8192$, and up to 21.1\,TFLOP/s
overall, i.e.\ 12--28$\times$ the native FP64 DGEMM of the same board; 75--90\% of
the run time is spent in INT8 GEMM at $\approx$95\% of its attainable rate. The
emulated ZGEMM, which uses the 3M form exactly in integer arithmetic, reaches up to
27.4\,TFLOP/s and is 13.7--17.8$\times$ faster than native ZGEMM for $N\ge4096$ at
$s=16$. With $s\ge16$ the results are correctly rounded for well-scaled inputs and,
using an extended-width residue path ($s=20$--$24$), also for inputs with wide
exponent ranges, whereas native DGEMM on GPU or CPU rounds only 2--13\% of the
entries correctly.
We further report worst-case accuracy, phase breakdowns, energy, an ablation of the
pipeline-level decisions, a comparison with Ozaki scheme~I on the same hardware, and a
validated performance model for projecting the method to Intel's announced
Xe3P-based products.
\end{abstract}

\section{Introduction}
\label{sec:intro}

\paragraph{High-precision matrix multiplication.}
The general matrix product $C \leftarrow \alpha\,\mathrm{op}(A)\,\mathrm{op}(B)+\beta C$
(GEMM, a level-3 BLAS routine \cite{dongarra1990blas3}) is the kernel onto which
dense linear algebra is mapped: blocked LU, Cholesky and QR factorizations,
eigenvalue and singular-value solvers, and the High-Performance LINPACK benchmark
all perform the overwhelming majority of their floating-point operations in GEMM
\cite{dongarra2003linpack}. Computational chemistry and materials science are a
prominent example: self-consistent-field methods such as Hartree--Fock and
Kohn--Sham density functional theory repeatedly transform and multiply dense Fock,
overlap and density matrices, both in localized-basis and in plane-wave
implementations (e.g.\ \cite{kresse1996vasp}). With Bloch
$k$-points these matrices are complex Hermitian, and many-body perturbation methods
such as GW \cite{deslippe2012berkeleygw} are dominated by large complex products
(ZGEMM). Complex GEMM is equally central to frequency-domain electromagnetics,
quantum transport and quantum-circuit simulation.

These applications need double precision. Total-energy differences of a few meV on
totals of $10^3$--$10^5$\,Hartree, orbital orthogonalization, ill-conditioned
overlap matrices and long iterative processes all amplify rounding errors beyond
what single precision can absorb \cite{higham2002accuracy}. Mixed-precision
algorithms such as iterative refinement move part of the work to lower precision
\cite{haidar2018refinement,abdelfattah2021mixed}, but they still require an FP64
residual and do not replace GEMM itself.

\paragraph{The FP64 gap and its emulation.}
GPU vendors now spend most of their transistor budget on low-precision matrix
engines for machine learning, and FP64 is falling behind. On NVIDIA's current
Blackwell generation, both the GeForce RTX~50 series and the RTX PRO workstation
boards built on the same GB20x chips execute FP64 at $1/64$ of the FP32 rate
\cite{nvidia_rtx_blackwell}. Even in the data centre, the Blackwell Ultra
platform HGX~B300 delivers 10\,TFLOP/s of FP64, compared with 296\,TFLOP/s for
its predecessor HGX~B200 \cite{nvidia_hgx}. Emulating FP64 GEMM with exact
low-precision products is therefore attractive, and NVIDIA itself now ships such
an emulation in cuBLAS \cite{cuda1302_relnotes,brower2025cublasemu}.
Ozaki scheme~I \cite{ozaki2012} splits the significands into $L$ slices and needs
$O(L^2)$ slice products; it has been mapped to Tensor Cores
\cite{mukunoki2020tensorcores} and to INT8 engines
\cite{ootomo2024dgemm,uchino2025ozaki1int8}, and a related splitting recovers FP32
accuracy on Tensor Cores \cite{ootomo2022sgemm}. Ozaki scheme~II
\cite{ozaki2025scheme2} takes a different route. Instead of splitting the
significands, it represents the integerized matrices by their residues modulo $s$
small, pairwise coprime integers and multiplies the residue matrices exactly. Each
additional modulus adds about 8 bits of precision at the cost of one more INT8
product, so the work grows linearly rather than quadratically with the target
precision. The scheme has since been extended to FP8 \cite{uchino2026fp8}
and to complex products \cite{uchino2025complex}, and shown to be competitive on
CPUs with AMX and SVE2 \cite{kouya2026ozaki2cpu}.

\paragraph{Why Intel Arc.}
Intel's Xe2-based Arc Pro B60 offers 24\,GB of GDDR6 \cite{intel_arc_b60} at a
retail price of about US\$600 \cite{b60_price}, an unusually large memory for its
price class. Unlike the preceding Arc Alchemist generation, Xe2 has native FP64
support \cite{intel_gpu_optguide}, but only at $1/16$ of the FP32 rate: we measure
0.76\,TFLOP/s for DGEMM and 11.9\,TFLOP/s for SGEMM, while the XMX matrix engines
deliver $\approx$190\,TOP/s of INT8. This ratio of $\approx$250 between INT8 and
FP64 throughput lets the emulation pay off by an order of magnitude, and the large
memory accommodates the $O(sN^2)$ workspace of the scheme for large problems.
Intel has also announced, but not yet shipped, its Xe3P-based data-centre GPU
``Crescent Island'' \cite{intel_crescent_island}, now the Intel GPU D-Series. Its
first model, the D270, carries up to 480\,GB of LPDDR5X and is advertised for
workloads from low-precision inference to FP64 scientific computing
\cite{intel_d_series}. Both products thus combine large memory with strong
low-precision throughput, which makes them natural targets for Ozaki scheme~II.

\paragraph{Contributions.}
\begin{itemize}
\item A complete, device-resident DGEMM/ZGEMM emulation for Intel GPUs in SYCL
  and oneMKL, with the full BLAS interface (\texttt{N/T/C}, $\alpha$, $\beta$,
  leading dimensions), $2\le s\le 24$ moduli, memory-aware tiling and $k$-splitting
  (Sections~\ref{sec:method} and~\ref{sec:zgemm}).
\item The implementation choices that matter on Xe2: integer-only residue
  generation from the bits of IEEE doubles, a branch-free symmetric residue, an
  FP32 CRT quotient with a proven safety margin of $1/4$, 16-bit limb
  accumulation, correctly rounded conversion, and work-arounds for performance
  cliffs of the oneMKL INT8 GEMM and for VRAM over-subscription
  (Section~\ref{sec:arc}).
\item A ZGEMM in the exact 3M (Karatsuba) form of \cite{uchino2025complex},
  integrated into the same device-resident pipeline: the split kernel emits the
  residues of $A_r+A_i$ directly, and the CRT kernel combines the three products
  before reconstruction, so that a complex product costs $3s$ INT8 GEMMs and
  inherits the accuracy of the real case (Section~\ref{sec:zgemm}).
\item An extended-width residue path (scaling width $t$ up to 100 bits) that
  delivers correctly rounded DGEMM and ZGEMM for inputs with wide exponent ranges.
\item A systematic evaluation covering verification (unit tests of the kernels
  and element-wise checks of the error bound), accuracy against a double-double
  reference for typical and worst-case inputs, performance with a breakdown into
  phases, energy efficiency, an ablation study, a comparison with Ozaki scheme~I on
  the same INT8 engine, and a validated performance model
  (Sections~\ref{sec:setup}--\ref{sec:discussion}).
\end{itemize}

\section{Background}
\label{sec:background}

\subsection{Notation}
We write $\mathrm{op}(A)\in\mathbb{R}^{m\times k}$ and
$\mathrm{op}(B)\in\mathbb{R}^{k\times n}$, stored in column-major order, and
$u=2^{-53}$. For an integer $x$ and a modulus $m$, $\sym_m(x)$ denotes the
representative of $x \bmod m$ in a fixed window of $m$ consecutive integers that
contains $0$; for $m\le 256$ it fits in a signed byte. $\rint$ denotes rounding to
nearest with ties to even.

\subsection{Ozaki scheme I}
Ozaki scheme~I \cite{ozaki2012} writes each scaled row of $A$ and column of $B$ as a
sum of $L$ low-precision slices, $A\approx\sum_i A_i$ and $B\approx\sum_j B_j$,
chosen so that a low-precision engine computes every product $A_iB_j$ without
rounding. The approximation $AB\approx\sum_{i+j\le L+1}A_iB_j$ then requires
$L(L+1)/2$ products. The INT8 variants \cite{ootomo2024dgemm,uchino2025ozaki1int8}
use slices of about 7 bits.

\subsection{Ozaki scheme II}
\label{sec:oz2}
Ozaki scheme~II \cite{ozaki2025scheme2} consists of four steps. We follow the
formulation of the original paper, with the modifications described in
Section~\ref{sec:method}.

\emph{1. Integerization with shared exponents.} Each row $i$ of $A$ and each column
$j$ of $B$ is scaled by a power of two such that
\begin{equation}
  \hat a_{ip}=\rint(a_{ip}2^{D_i}),\qquad \hat b_{pj}=\rint(b_{pj}2^{E_j}),\qquad
  |\hat a_{ip}|,|\hat b_{pj}|\le 2^{t}.
  \label{eq:scale}
\end{equation}
Then $\hat c_{ij}=\sum_p\hat a_{ip}\hat b_{pj}\in\mathbb{Z}$ and
$c_{ij}\approx 2^{-D_i-E_j}\hat c_{ij}$. Errors arise only from the rounding in
Eq.~\eqref{eq:scale} and from the final rounding of $\hat c_{ij}2^{-D_i-E_j}$.

\emph{2. Residues.} For pairwise coprime moduli $m_1,\dots,m_s\le 256$ with
$M=\prod_t m_t$, the residue matrices $A^{(t)}=\sym_{m_t}(\hat A)$ and
$B^{(t)}=\sym_{m_t}(\hat B)$ have entries in $[-128,128]$ and are stored as INT8.

\emph{3. Exact INT8 products.} Each product $C^{(t)}=A^{(t)}B^{(t)}$ is computed with
INT32 accumulation and is exact as long as $k\cdot 128^2<2^{31}$, i.e.\
$k\le 131071$.

\emph{4. CRT reconstruction.} By the Chinese remainder theorem (see, e.g.,
\cite[\S4.3.2]{knuth1997taocp2}), with $M_t=M/m_t$ and $y_t=M_t^{-1}\bmod m_t$,
\begin{equation}
  \hat c_{ij}\equiv \sum_{t=1}^{s} C^{(t)}_{ij}\,M_ty_t \pmod M,
  \label{eq:crt}
\end{equation}
and $\hat c_{ij}$ is the unique representative with $|\hat c_{ij}|<M/2$, provided
that $|\hat c_{ij}|<M/2$ is known a priori. The method needs $s$ INT8 products, and
each modulus contributes $\log_2 m_t\approx 7.9$ bits to $\log_2 M$.

\subsection{Related implementations}
GEMMul8 \cite{gemmul8} implements Ozaki scheme~II for NVIDIA and AMD GPUs, and
ozIMMU \cite{ozimmu} implements the INT8 variant of Ozaki scheme~I. Kouya
\cite{kouya2026ozaki2cpu} implements Ozaki scheme~II for multiple precision on
Intel AMX and Arm SVE2. Uchino et al.\ \cite{uchino2025complex} extend Ozaki
scheme~II to CGEMM and ZGEMM with a Karatsuba (3M) formulation, also released in
GEMMul8, and report speed-ups on NVIDIA and AMD GPUs. Among vendor libraries,
cuBLAS has offered opt-in fixed-point emulation of FP64 matrix products (DGEMM and
ZGEMM) on INT8 Tensor Cores since CUDA~13.0 Update~2
\cite{cuda1302_relnotes,brower2025cublasemu}, whereas the oneMKL 2026 release
notes list no comparable feature \cite{onemkl_relnotes2026}. We are not aware of
any implementation for Intel GPUs.

\section{Target platform: Intel Arc Pro B60 (Xe2)}
\label{sec:platform}

The B60 has 20 Xe2 cores with 160 XMX matrix engines and 24\,GB of GDDR6 on a
192-bit bus with a nominal bandwidth of 456\,GB/s. Intel specifies 197\,TOP/s of
dense INT8, 12.28\,TFLOP/s of FP32 and a total board power of 120--200\,W,
depending on the partner design \cite{intel_arc_b60}; the maximum GPU clock of our
board is 2.4\,GHz. Table~\ref{tab:engines} lists the measured throughput of the
operations relevant here. Three properties of the device shape the
implementation:
\begin{enumerate}
\item[(i)] INT8 GEMM in oneMKL reaches 184--186\,TOP/s, about 250 times the native
  DGEMM rate of 0.76\,TFLOP/s.
\item[(ii)] FP64 vector arithmetic runs at $1/16$ of the FP32 rate, and the
  conversion from FP64 to INT64 is not native. Kernels that are meant to run at
  memory speed must therefore avoid FP64 altogether.
\item[(iii)] The memory bandwidth is $\approx$400\,GB/s for incompressible data,
  but the hardware compresses regular data (constant buffers are read at
  1.1\,TB/s). All memory-bound measurements therefore use random data.
\end{enumerate}

These numbers also settle the choice of backend. Following the per-bit argument of
\cite[\S3.3]{kouya2026ozaki2cpu}, one INT8 product contributes $\approx$7.85 bits
to $\log_2M$ at 186\,TOP/s, whereas one DGEMM in an FP64-based Ozaki-II backend
contributes $\approx$21.5 bits ($k=4096$) at 0.76\,TFLOP/s. Per bit of accuracy,
the INT8 path is thus $\approx$90 times cheaper on this device, and we do not
consider an FP64 backend.

\begin{table}[tbp]
\centering
\caption{Measured throughput on the test system (oneMKL 2026.1, random data).
Energy per operation is taken from the board energy counter.}
\label{tab:engines}
\small
\begin{tabular}{llrr}
\toprule
Operation & Size & Rate & pJ/op \\
\midrule
Memory copy (random data) & 4GiB & 397.20 GB/s & -- \\
Memory copy (constant data) & 4GiB & 1088.00 GB/s & -- \\
DGEMM, oneMKL, GPU & 4096 & 0.75 TFLOP/s & 98.0 \\
SGEMM, oneMKL, GPU & 4096 & 12.00 TFLOP/s & 10.1 \\
INT8 GEMM, oneMKL, GPU & 4096 & 170.90 TOP/s & 1.0 \\
DGEMM, oneMKL, GPU & 8192 & 0.76 TFLOP/s & 99.7 \\
SGEMM, oneMKL, GPU & 8192 & 11.89 TFLOP/s & 9.2 \\
INT8 GEMM, oneMKL, GPU & 8192 & 184.30 TOP/s & 1.0 \\
INT8 GEMM, batch of 16, GPU & 8192 & 185.50 TOP/s & 1.0 \\
\bottomrule
\end{tabular} {}
\end{table}

\section{The emulation algorithm as implemented}
\label{sec:method}

Algorithm~\ref{alg:gemm} summarizes the pipeline for one output tile; in the
common case the whole matrix $C$ forms a single tile. All steps run on the GPU, and
the host only computes tables of size $O(s)$. In the following we describe each
step and indicate where it follows the literature and where it departs from it.

\begin{algorithm}[tbp]
\caption{Emulated DGEMM (Ozaki scheme II), one tile, $\mathrm{op}=\mathtt{N}$}
\label{alg:gemm}
\begin{algorithmic}[1]
\Require $A\in\mathbb{R}^{m\times k}$, $B\in\mathbb{R}^{k\times n}$, $s$ moduli $m_1..m_s$
\State $t \gets$ largest integer with $k\,2^{2t+2}\le M$, capped at 100 \Comment{host, Eq.~\eqref{eq:tbound}}
\State $D_i,E_j\gets$ Eq.~\eqref{eq:exps} \Comment{kernels \textsc{RowMax}, \textsc{ColMax}}
\State $(A^{(t)})_{t=1}^s \gets \textsc{Split}(A, D)$,\ $(B^{(t)})_{t=1}^s \gets \textsc{Split}(B, E)$ \Comment{one pass each, Alg.~\ref{alg:split}}
\State $C^{(t)} \gets A^{(t)}B^{(t)}$ for $t=1..s$ \Comment{one strided-batch INT8 GEMM}
\State $C \gets \alpha\,\textsc{CRT}(C^{(1..s)}, D, E)+\beta C$ \Comment{Alg.~\ref{alg:crt}}
\end{algorithmic}
\end{algorithm}

\subsection{Choice of the scaling width $t$}
For $x>0$ let $\mathrm{e}(x)$ denote the binary exponent returned by the C function
\texttt{frexp}, i.e.\ the integer with $2^{\mathrm{e}(x)-1}\le x<2^{\mathrm{e}(x)}$. The
scaling exponents in Eq.~\eqref{eq:scale} are
\begin{equation}
  D_i=t-\mathrm{e}\big(\max_p|a_{ip}|\big),\qquad E_j=t-\mathrm{e}\big(\max_p|b_{pj}|\big),
  \label{eq:exps}
\end{equation}
so that $|a_{ip}2^{D_i}|<2^t$ and, after rounding, $|\hat a_{ip}|\le 2^t$ (likewise
for $B$). To choose $t$, we impose the a-priori bound
\begin{equation}
  |\hat c_{ij}|\le k\,2^{2t}\le M/4,
  \label{eq:tbound}
\end{equation}
which is half of the range required for uniqueness in Eq.~\eqref{eq:crt}. This factor
of $1/2$ is the price of the fast quotient evaluation described in
Section~\ref{sec:crt}; it costs one bit of $t$ per two moduli. Table~\ref{tab:moduli} lists $t$ for
typical values of $k$. For $s\ge 16$ and $k\le 2^{14}$ we have $t\ge 53$, so every
element within $t-53$ binades of its row maximum is represented exactly.

\begin{table}[tbp]
\centering
\caption{Moduli configuration: $\log_2 M$, number of 32-bit limbs used in the
reconstruction, and scaling width $t$ for several values of $k$.}
\label{tab:moduli}
\small
\begin{tabular}{rrrrrrr}
\toprule
$s$ & $\log_2 M$ & $N_L$ & $t$, $k{=}1024$ & $t$, $k{=}4096$ & $t$, $k{=}16384$ & $t$, $k{=}65536$ \\
\midrule
4 & 31.9 & 2 & 9 & 8 & 7 & 6 \\
6 & 47.8 & 2 & 17 & 16 & 15 & 14 \\
8 & 63.6 & 2 & 25 & 24 & 23 & 22 \\
10 & 79.3 & 3 & 33 & 32 & 31 & 30 \\
12 & 94.9 & 3 & 41 & 40 & 39 & 38 \\
14 & 110.4 & 4 & 49 & 48 & 47 & 46 \\
16 & 125.6 & 4 & 56 & 55 & 54 & 53 \\
18 & 140.8 & 5 & 64 & 63 & 62 & 61 \\
20 & 155.8 & 5 & 71 & 70 & 69 & 68 \\
22 & 170.6 & 6 & 79 & 78 & 77 & 76 \\
24 & 185.2 & 6 & 86 & 85 & 84 & 83 \\
\bottomrule
\end{tabular} {}
\end{table}

\subsection{Residue generation without floating-point arithmetic}
\label{sec:split}
Because FP64 is slow on Xe2, we compute $\hat a=\rint(a\,2^{D})$ directly from the
bit pattern of $a$. Writing $a=\pm\mu\,2^{e_a-1075}$ with integer significand
$\mu<2^{53}$, we obtain $\hat a=\pm\mu\,2^{\sigma}$ for
$\sigma=e_a-1075+D\ge 0$, and otherwise a right shift with round-to-nearest-even.
To avoid branching on the sign, we work with the non-negative offset value
$u=\hat a+2^{62}$ (or $\hat a+2^{104}$ when $t>62$), split into 21-bit limbs
$u=\sum_\ell u_\ell 2^{21\ell}$. For each modulus we form
\begin{equation}
  v=\sum_\ell u_\ell\,(2^{21\ell}\bmod m) + K_m < 2^{32},\quad
  K_m=(h_m-2^{62})\bmod m,\quad h_m=\lfloor (m-1)/2\rfloor,
\end{equation}
so that $(v\bmod m)-h_m\equiv\hat a\pmod m$ lies in the window
$[-h_m,\,m-1-h_m]\subset[-127,128]$. The value 128, which occurs only for
$m=256$, wraps to $-128$ in INT8; this is congruent and preserves the product
bound $128^2$. The reduction $v\bmod m$ uses Barrett's method \cite{barrett1986}
with a single \texttt{mul\_hi} and one conditional subtraction
($\mu_m=\lfloor 2^{32}/m\rfloor$, valid because $v<2^{32}$).
Algorithm~\ref{alg:split} shows the kernel. Each work-item converts four
consecutive elements of one column and writes one 32-bit word per modulus, so the
kernel reads $A$ once and writes $s$ bytes per element.

\begin{algorithm}[tbp]
\caption{\textsc{Split} kernel (one work-item, four rows $r..r{+}3$ of column $c$)}
\label{alg:split}
\begin{algorithmic}[1]
\For{$e=0..3$}
  \State $u \gets \textsc{ScaledOffsetInt}(X_{r+e,c},\ D_{r+e})$ \Comment{integer shifts with RNE, $u=\hat x+2^{62}$}
  \State $(u_0,u_1,u_2)\gets$ 21-bit limbs of $u$
\EndFor
\For{$t=1..s$} \Comment{unrolled; $s$ is a template parameter}
  \For{$e=0..3$}
    \State $v\gets u_0+u_1c_{t,1}+u_2c_{t,2}+K_t$ \Comment{$c_{t,\ell}=2^{21\ell}\bmod m_t$}
    \State $q\gets\texttt{mul\_hi}(v,\mu_t)$;\ $\rho\gets v-q\,m_t$;\ $\rho\gets\min(\rho,\rho-m_t)$ \Comment{unsigned}
    \State $\textit{byte}_e\gets(\rho-h_t)\ \&\ \mathtt{0xFF}$
  \EndFor
  \State store $(\textit{byte}_0..\textit{byte}_3)$ as one 32-bit word into plane $t$
\EndFor
\end{algorithmic}
\end{algorithm}

\subsection{INT8 products}
The $s$ products are issued as a single strided-batch call
(\texttt{oneapi::mkl::blas::gemm\_batch}, \texttt{int8}$\times$\texttt{int8}$\to$\texttt{int32}).
For $k>131071$ we split the $k$ dimension into chunks and sum their residues in the
reconstruction.

\subsection{CRT reconstruction with an FP32 quotient}
\label{sec:crt}
Let $r_t\in[0,2m_t)$ be any representative of $C^{(t)}_{ij}\bmod m_t$ and
$W_t=M_ty_t$. Then $Z=\sum_t r_tW_t\equiv\hat c_{ij}\pmod M$ and, exactly,
\begin{equation}
  \frac{Z}{M}=\sum_{t=1}^s\frac{r_t\,y_t}{m_t},
\end{equation}
an identity also used in \cite[\S4.4]{kouya2026ozaki2cpu}. Writing
$Z=\hat c_{ij}+qM$ with $q\in\mathbb{Z}$, the bound of Eq.~\eqref{eq:tbound} gives
$|Z/M-q|\le 1/4$. The integer $q$ can therefore be recovered from a crude
approximation of $Z/M$.

\begin{lemma}
Let $\tilde z$ be $\sum_t r_t\,\fl_{32}(y_t/m_t)$ accumulated with $s$ FP32 fused
multiply--adds. For $s\le 24$ and $m_t\le 256$, $|\tilde z-Z/M|<2^{-6}$; hence
$q=\rint(\tilde z)$.
\end{lemma}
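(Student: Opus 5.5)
We bound the total error $|\tilde z - Z/M|$ by splitting it into two parts: the error from rounding the constants $y_t/m_t$ to FP32, and the error from accumulating the $s$ terms with FP32 fused multiply--adds. Exactness of the identity $Z/M=\sum_t r_t y_t/m_t$ lets us compare $\tilde z$ term by term with the exact sum.

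\emph{Step 1 (magnitudes).} Since $0\le y_t<m_t$ and $0\le r_t<2m_t\le 512$, each coefficient $w_t=y_t/m_t$ lies in $[0,1)$ and each term $r_tw_t$ lies in $[0,512)$. All partial sums are nonnegative and bounded by $S_j=\sum_{t\le j} r_t w_t\le 512j$, so at most $512\cdot 24=12288<2^{14}$ in magnitude. The FP32 values stay slightly above the exact ones but remain below $2^{14}$; a crude bound with margin suffices.

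\emph{Step 2 (constant rounding).} We have $\fl_{32}(w_t)=w_t(1+\delta_t)$ with $|\delta_t|\le 2^{-24}$, so $|\fl_{32}(w_t)-w_t|\le 2^{-24}w_t<2^{-24}$. More sharply, since $w_t<1$ the absolute error is at most $\tfrac12\mathrm{ulp}\le 2^{-25}$. Multiplying by $r_t<2^9$ gives at most $2^{-16}$ per term, hence at most $24\cdot 2^{-16}<2^{-11}$ in total.

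\emph{Step 3 (FMA accumulation).} Each FMA $\tilde S_j=\fl(\tilde S_{j-1}+r_j\fl_{32}(w_j))$ rounds once. Because $r_j$ is a small integer, it is exact in FP32, so the product is exact inside the FMA. The rounding error is at most half an ulp of a number below $2^{14}$, i.e.\ $\le 2^{13}\cdot 2^{-23}\cdot\tfrac12 \cdot 2 = 2^{-10}$ (using $\mathrm{ulp}(x)=2^{\lfloor\log_2 x\rfloor-23}$ and $x<2^{14}$, so $\tfrac12\mathrm{ulp}\le 2^{-10}$). Over $s\le 24$ steps this gives at most $24\cdot 2^{-10}<2^{-5.4}$.

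This is the main obstacle. The crude bound just misses $2^{-6}$, so the accumulation needs a sharper per-step estimate. I would first use the actual growth of the partial sums: after $j$ steps $\tilde S_j<512j(1+\epsilon)$. The ulp therefore grows in stages, and summing $\tfrac12\mathrm{ulp}(512j)$ over $j=1..24$ gives
\[
\sum_j 2^{\lfloor\log_2(512j)\rfloor-24}=2^{-15}\sum_{j=1}^{24}2^{\lfloor\log_2 j\rfloor}.
\]
Here $\sum_{j=1}^{24}2^{\lfloor\log_2 j\rfloor}=1+2\cdot2+4\cdot4+8\cdot8+16\cdot9=229$, so the sum is $229\cdot2^{-15}\approx 0.007<2^{-7}$. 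Adding Step 2 gives a total below $2^{-7}+2^{-11}<2^{-6}$. If boundary cases arise (a partial sum rounding up across a power of two), Step 1's margin is enough to absorb them; alternatively, if $r_t<m_t$ holds in the implementation, the halved magnitudes remove the issue entirely.

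\emph{Step 4 (conclusion).} From the preceding discussion, $|Z/M-q|\le 1/4$. Combined with $|\tilde z-Z/M|<2^{-6}$, this gives $|\tilde z-q|<1/4+2^{-6}<1/2$. Hence $\rint(\tilde z)=q$, with no tie possible.
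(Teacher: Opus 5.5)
Your proof is correct and uses the same decomposition as the paper: the FP32 representation error of $y_t/m_t$ times $r_t<2^9$, plus one rounding per FMA on partial sums below $2^{14}$, each summed over at most $24$ terms.

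The detour in Step 3 is unnecessary, because your claim that the crude bound ``just misses'' $2^{-6}$ comes from an arithmetic slip. For a partial sum below $2^{14}$, half an FP32 ulp is $2^{13-24}=2^{-11}$. Your own product $2^{13}\cdot2^{-23}\cdot\tfrac12$ already equals $2^{-11}$; the extra factor $2$ is spurious. With the correct value, $24(2^{-11}+2^{-16})<2^{-6}$ holds directly, and this is exactly the paper's proof, which uses the slightly cruder $2^{9}\cdot2^{-24}$ per representation error.

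Your growth-aware summation, $229\cdot2^{-15}<2^{-7}$, is still valid. The boundary worry is harmless, since $r_t\le 511$ and $\fl_{32}(y_t/m_t)\le 255/256$ keep every pre-rounding sum below $512j$. That summation gives a tighter constant but is not needed.
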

\begin{proof}[Proof sketch]
All partial sums are below $2\cdot 256\cdot 24<2^{14}$, so each FMA contributes a
rounding error of at most $2^{-11}$, and each representation error satisfies
$|r_t(\fl_{32}(y_t/m_t)-y_t/m_t)|\le 2^{9}2^{-24}$. The total is below
$24(2^{-11}+2^{-15})<2^{-6}$, well inside the margin of $1/4$.
\end{proof}

Once $q$ is known, $\hat c_{ij}=Z-qM$ is an integer with
$|\hat c_{ij}|\le M/4<2^{32N_L-1}$, where $N_L=\lfloor\log_2M/32\rfloor+1$, and we
compute it exactly in two's complement modulo $2^{32N_L}$. Neither long division
nor correction steps are needed, in contrast with the FP64 quotient estimate and
subsequent integer correction of \cite{ozaki2025scheme2,kouya2026ozaki2cpu}. We
accumulate $Z$ in 16-bit limbs held in 32-bit registers. Every term satisfies
$r_t\cdot W_{t,\ell}<2^{9}\cdot2^{16}$, and 24 such terms stay below $2^{30}$, so
carries need to be propagated only once at the end; moreover, all multiplications
are $16\times16$-bit, which Xe2 executes natively, unlike $32\times32\to64$-bit
products. Finally, the $32N_L$-bit integer is normalized, its top 64 bits are
OR-ed with a sticky bit of the remaining bits and converted with
round-to-nearest-even, and the result is multiplied by the exact power of two
$2^{-D_i-E_j}$. The output is therefore the correctly rounded value of
$2^{-D_i-E_j}\hat c_{ij}$ unless it is subnormal.

\begin{algorithm}[tbp]
\caption{\textsc{CRT} kernel (one work-item per entry $(i,j)$)}
\label{alg:crt}
\begin{algorithmic}[1]
\State $\tilde z\gets 0$;\ $\mathit{acc}_\ell\gets 0$ for $\ell=0..2N_L-1$
\For{$t=1..s$} \Comment{unrolled}
  \State $r\gets \textsc{Barrett}(C^{(t)}_{ij}+2^{31})+K'_t$ \Comment{$r\in[0,2m_t)$, $K'_t=(-2^{31})\bmod m_t$}
  \State $\tilde z\gets\mathrm{fma}_{32}(r,\ y_t/m_t,\ \tilde z)$
  \State $\mathit{acc}_\ell\gets\mathit{acc}_\ell+r\cdot W_{t,\ell}$ for all $\ell$ \Comment{16-bit limbs of $W_t=M_ty_t$}
\EndFor
\State $q\gets\rint(\tilde z)$;\ $\mathit{acc}_\ell\gets\mathit{acc}_\ell-q\,M_\ell$; propagate carries \Comment{$\hat c=Z-qM$ exactly}
\State $c\gets\textsc{RoundToDouble}(\hat c)\cdot2^{-D_i-E_j}$ \Comment{normalize, sticky bit, RNE}
\State $C_{ij}\gets\alpha c+\beta C_{ij}$
\end{algorithmic}
\end{algorithm}

\subsection{Error bound and the precision knob}
Combining the rounding in Eq.~\eqref{eq:scale} with the final rounding gives, for every
entry,
\begin{equation}
  |\tilde c_{ij}-c_{ij}|\le\tfrac12 2^{-D_i}\sum_p|b_{pj}|+\tfrac12 2^{-E_j}\sum_p|a_{ip}|
  +\tfrac14k\,2^{-D_i-E_j}+u|\tilde c_{ij}|,
  \label{eq:bound}
\end{equation}
which our tests verify element by element (Section~\ref{sec:correctness}). Since
$2^{-D_i}\approx2^{-t}\max_p|a_{ip}|$, the first three terms are of order
$2^{-t}(|A||B|)_{ij}$. The number of moduli $s$ thus determines $t$ and, through
it, the accuracy. The helper \texttt{moduli\_for\_bits}$(k,b)$ returns the
smallest $s$ for which $t\ge b$.

\section{Complex matrix multiplication (ZGEMM)}
\label{sec:zgemm}
For $A=A_r+iA_i$ and $B=B_r+iB_i$ we use one shared exponent per row of
$\mathrm{op}(A)$, taken over both real and imaginary parts, and one per column of
$\mathrm{op}(B)$. The integerized parts then satisfy
$|\mathrm{Re}\,\hat C|,|\mathrm{Im}\,\hat C|\le2k\,2^{2t}$, so $t$ is chosen
with $2k$ in place of $k$ in Eq.~\eqref{eq:tbound}. As in \cite{uchino2025complex},
we use the 3M (Karatsuba) form
\begin{equation}
  P_0=\hat A_r\hat B_r,\quad P_1=\hat A_i\hat B_i,\quad
  P_2=(\hat A_r+\hat A_i)(\hat B_r+\hat B_i),\qquad
  \mathrm{Re}\,\hat C=P_0-P_1,\quad \mathrm{Im}\,\hat C=P_2-P_0-P_1,
\end{equation}
evaluated entirely in residues. The split kernel emits three groups of $s$ planes,
$\sym(\hat a_r)$, $\sym(\hat a_i)$ and $\sym(\sym(\hat a_r)+\sym(\hat a_i))$, and
implements conjugation by negating the imaginary residue. Three strided-batch INT8
GEMMs follow, and the CRT kernel forms $(P_0-P_1)\bmod m_t$ and
$(P_2-P_0-P_1)\bmod m_t$ before two independent reconstructions. Because every
product is exact, the 3M form incurs no accuracy penalty here, unlike its
floating-point counterpart \cite{higham1992complex3m}, and it saves a quarter of
the INT8 GEMMs ($3s$ instead of $4s$). Algorithm~\ref{alg:zgemm} summarizes the
procedure.

\begin{algorithm}[tbp]
\caption{Emulated ZGEMM, 3M form}
\label{alg:zgemm}
\begin{algorithmic}[1]
\State $t\gets$ Eq.~\eqref{eq:tbound} with $2k$;\ $D_i,E_j$ from $\max(|\mathrm{Re}|,|\mathrm{Im}|)$
\State $(A_r^{(t)},A_i^{(t)},A_\Sigma^{(t)})\gets\textsc{SplitC}(A,D)$; same for $B$ \Comment{$A_\Sigma^{(t)}=\sym(A_r^{(t)}+A_i^{(t)})$}
\State $P_0^{(t)}\gets A_r^{(t)}B_r^{(t)}$,\ $P_1^{(t)}\gets A_i^{(t)}B_i^{(t)}$,\ $P_2^{(t)}\gets A_\Sigma^{(t)}B_\Sigma^{(t)}$ \Comment{3 batched calls}
\State $\mathrm{Re}\gets\textsc{CRT}\big((P_0-P_1)\bmod m_t\big)$,\ $\mathrm{Im}\gets\textsc{CRT}\big((P_2-P_0-P_1)\bmod m_t\big)$
\State $C\gets\alpha(\mathrm{Re}+i\,\mathrm{Im})2^{-D_i-E_j}+\beta C$
\end{algorithmic}
\end{algorithm}

\section{Implementation decisions specific to Intel Arc (Xe2)}
\label{sec:arc}

\paragraph{Kernel code generation.} The split and CRT kernels are templates over
$s$. All loops over moduli are unrolled, and the modulus tables (at most 1\,KB) are
passed as kernel arguments instead of being read from memory. Both kernels use
sub-groups of 32 work-items; the complex split uses 16 because of register
pressure. With integer-only residue generation, 16-bit limbs and a $[0,2m)$
representative, both kernels are memory-bound: at $N=8192$ the split kernel
streams 270--340\,GB/s and the CRT kernel 373--416\,GB/s, i.e.\ close to the
measured memory bandwidth (Table~\ref{tab:kernels}).

\paragraph{Using the vendor INT8 GEMM.} The INT8 products are computed by the
batched GEMM of oneMKL (\texttt{gemm\_batch}), which reaches 184--193\,TOP/s on the B60. A
hand-written \texttt{joint\_matrix} kernel is possible, but offers no advantage: the
INT8 GEMM is currently a core operation of the vendor's inference stack and is
tuned by the vendor for each generation of XMX engines. Relying on it lets the
emulation inherit these improvements without changes to our code, and confines our own kernels to the
memory-bound split and CRT phases. We do, however, adapt the calling pattern to
three behaviours of the routine that we measured:
(i) with \texttt{transb=trans} its throughput drops from 175 to 33\,TOP/s, so we
first transpose a transposed $\mathrm{op}(B)$ through shared local memory into a
temporary tile and always store the residue planes of $B$ in \texttt{N} layout;
(ii) a leading dimension of the $A$ planes that is a multiple of
$12288=3\cdot4096$ reduces the throughput to 120--145\,TOP/s, which a padding of 64
bytes eliminates (Table~\ref{tab:lda});
(iii) tiles whose row count is not a power of two run 5--17\% slower, so the tiler
prefers $m_b=2^j$ (Figure~\ref{fig:shapes}).

\begin{table}[tbp]
\centering
\begin{minipage}[t]{0.40\textwidth}
\centering
\raggedright
\caption{Split and CRT kernels ($N=8192$): time (ms) and effective bandwidth
(GB/s).}
\label{tab:kernels}
\centering
\footnotesize
\begin{tabular}{lrrr}
\toprule
kernel & $s$ & ms & GB/s \\
\midrule
split & 8 & 3.18 & 337 \\
split & 16 & 4.73 & 340 \\
split & 24 & 7.94 & 270 \\
\midrule
CRT & 8 & 7.20 & 373 \\
CRT & 16 & 11.90 & 406 \\
CRT & 24 & 16.78 & 416 \\
\bottomrule
\end{tabular} {}
\end{minipage}\hfill
\begin{minipage}[t]{0.55\textwidth}
\centering
\raggedright
\caption{oneMKL INT8 GEMM throughput (TOP/s), $k=8192$, with the natural leading
dimension (lda$=m$) and with a padded one.}
\label{tab:lda}
\centering
\footnotesize
\begin{tabular}{rrrrr}
\toprule
$m$ & $n$ & lda$=m$ & lda & padded \\
\midrule
4096 & 4096 & 183 & 4160 & 182 \\
4096 & 16384 & 185 & 4160 & 185 \\
8192 & 4096 & 184 & 8256 & 184 \\
8192 & 16384 & 185 & 8256 & 185 \\
12288 & 4096 & 129 & 12352 & 175 \\
12288 & 16384 & 122 & 12352 & 181 \\
16384 & 4096 & 184 & 16448 & 185 \\
16384 & 16384 & 185 & 16448 & 186 \\
\bottomrule
\end{tabular} {}
\end{minipage}
\end{table}

\begin{figure}[tbp]
\centering
\includegraphics[width=0.55\textwidth]{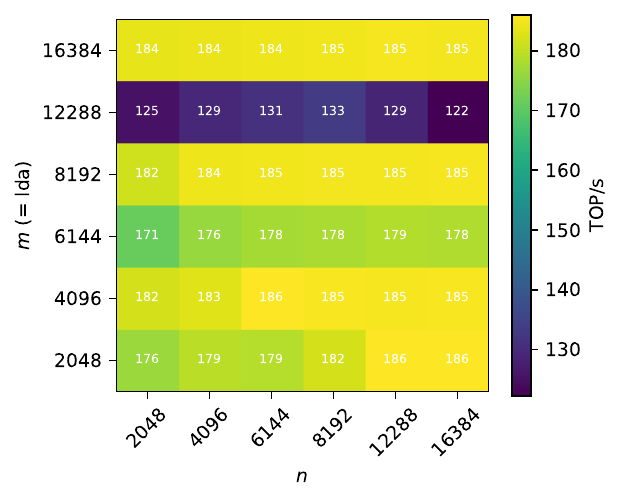}
\caption{oneMKL INT8 \texttt{gemm\_batch} throughput (TOP/s) by tile shape, $k=8192$.}
\label{fig:shapes}
\end{figure}

\paragraph{Memory-aware tiling.} A tile of size $m_b\times n_b$ requires a workspace
of $s(m_bk+kn_b+4m_bn_b)$ bytes ($\times3$ for ZGEMM). When this exceeds the
memory budget, $C$ is tiled. Candidate tile sizes are scored with a simple cost
model, consisting of the GEMM time with a shape-efficiency factor plus the split
and CRT traffic, and the cheapest feasible tiling is selected. The budget is
derived from the driver's free-memory query. We observed, however, that the Linux
\texttt{xe} driver permits VRAM over-subscription by evicting buffers to system
memory. After such an eviction the free-memory query overstates the available VRAM,
and a workspace that seemingly fits degrades performance by an order of magnitude
while consuming host RAM. The API therefore accepts a hint of the caller's
resident memory, which caps the budget.

\paragraph{Launch and compilation.} With 23 values of $s$ and several kernel
variants, the module contains about 150 kernels. When built just-in-time without
per-kernel device-code splitting, the first call compiles all of them, which takes
38\,s. With \texttt{-fsycl-device-code-split=per\_kernel} or ahead-of-time
compilation for \texttt{bmg-g21}, the first call takes 30--60\,ms
(Table~\ref{tab:firstcall}). All work is submitted asynchronously to one in-order
queue, and \texttt{gemm} and \texttt{zgemm} return a \texttt{sycl::event}.

\begin{table}[tbp]
\centering
\caption{First-call latency in a fresh process (s).}
\label{tab:firstcall}
\footnotesize
\begin{tabular}{lrrrr}
\toprule
build & $s{=}8$ (first) & $s{=}12$ & $s{=}16$ & warm call \\
\midrule
AOT (bmg-g21) & 0.029 & 0.002 & 0.003 & 0.0004 \\
JIT, per-kernel split & 0.057 & 0.004 & 0.004 & 0.0004 \\
JIT, default & 38.379 & 0.001 & 0.001 & 0.0005 \\
\bottomrule
\end{tabular} {}
\end{table}

\paragraph{What did not help.} Running the memory-bound CRT of one tile on a second
queue, concurrently with the INT8 GEMM of the next tile, produced no overlap,
because the GEMM occupies all Xe cores. Two-pass row-maximum reductions and vector
loads did not improve the scaling kernels, which already run at memory speed.

\section{Experimental setup}
\label{sec:setup}
All measurements were taken on an Intel Arc Pro B60 (24\,GB, maximum GPU clock
2400\,MHz) under Ubuntu 26.04.1 LTS with Linux 7.0 and the \texttt{xe} kernel driver. The
display is driven by a separate GPU, so the B60 carries no graphics workload.
Because the library is fully device-resident, the host system does not affect the
GPU measurements; the host CPU serves only to compute reference results with
Intel MKL DGEMM/ZGEMM for the accuracy comparison. We used the Intel oneAPI DPC++/C++ Compiler 2026.1.1
(SYCL 2020 \cite{sycl2020}), oneMKL 2026.1 \cite{onemkl} and the Level Zero runtime,
with ahead-of-time compilation for \texttt{bmg-g21}.

Matrices are generated on the GPU. Times are wall-clock times from submission to
completion of \texttt{gemm}, with the workspace allocated by a previous call; we
report the best of 3--7 repetitions (medians differ by less than 1\%). Emulated
FP64 rates count $2mnk$ flops for DGEMM and $8mnk$ for ZGEMM. Energy is read from
the board energy counter of the \texttt{xe} driver over runs of at least 2\,s.

We use the following test matrices: $\mathcal{U}[-1,1)$ (as in
\cite{kouya2026ozaki2cpu}); $(\mathcal{U}(0,1)-\frac12)\exp(\phi\,\mathcal{N}(0,1))$
with $\phi\in\{0.5,1,2,4\}$ (as in \cite{ozaki2025scheme2}); $\sin/\cos$ matrices;
and two adversarial families described in Section~\ref{sec:worst}. Accuracy is
measured against a double-double GEMM computed on the GPU (error $\approx2^{-100}$
relative to $|A||B|$) on $512\times512$ outputs. We report the maximum error in
ulps of the exact result, the fraction of correctly rounded entries, the
componentwise error $\max|\tilde c_{ij}-c_{ij}|/(|A||B|)_{ij}$ and the normwise
relative error. All experiments can be reproduced (see Code and data availability).

\section{Results}
\label{sec:results}

\subsection{Correctness}
\label{sec:correctness}
The test program (108 tests) checks the bound of Eq.~\eqref{eq:bound} element-wise against
the double-double reference for all combinations of \texttt{N/T/C}, $\alpha$ and
$\beta$, leading dimensions larger than the row count, $s=2..24$, $k$ from 1 to
300\,000, and tiling under a workspace limit of 4--8\,MB; tiled results are bitwise
identical to untiled ones. Two unit tests follow \cite{kouya2026ozaki2cpu}. The
split kernel is compared with \texttt{\_\_int128} arithmetic on 2.2 million
residues, including subnormals and signed zeros, and the CRT kernel reconstructs
8192 random exact integers with $|\hat c|\le M/4$ (including $|\hat c|=M/4$) for
every $s=2..24$. Both tests report zero mismatches. Integer inputs give bitwise
exact results, and Inf/NaN inputs propagate NaN to the affected row or column.

\subsection{Accuracy}
\label{sec:accuracy}
Figure~\ref{fig:accuracy} and Table~\ref{tab:accuracy} show how the error depends on
$s$. The componentwise error decreases by a factor of about $2^{4}$ per modulus,
as predicted by $t\approx(\log_2M-\log_2k)/2$. The componentwise accuracy of native
DGEMM is reached at $s=14$--$15$ for well-scaled data and at $s=17$ for $\phi=2$
(Table~\ref{tab:mins}). Beyond this point the result becomes correctly rounded. For
$\mathcal{U}[-1,1)$, all entries are correctly rounded from $s=15$--$16$ on,
whereas native DGEMM rounds only 1.7--3\% (oneMKL on the GPU) and 2--13\% (MKL on
the CPU) of the entries correctly, with maximum errors of $10^5$--$10^7$ ulps caused
by cancellation. For
inputs with wide exponent ranges ($\phi=0.5$--$2$ and $\sin/\cos$), correct
rounding requires $t>62$ and thus the extended-width path with $s=19$--$23$.

\begin{figure}[tbp]
\centering
\includegraphics[width=\textwidth]{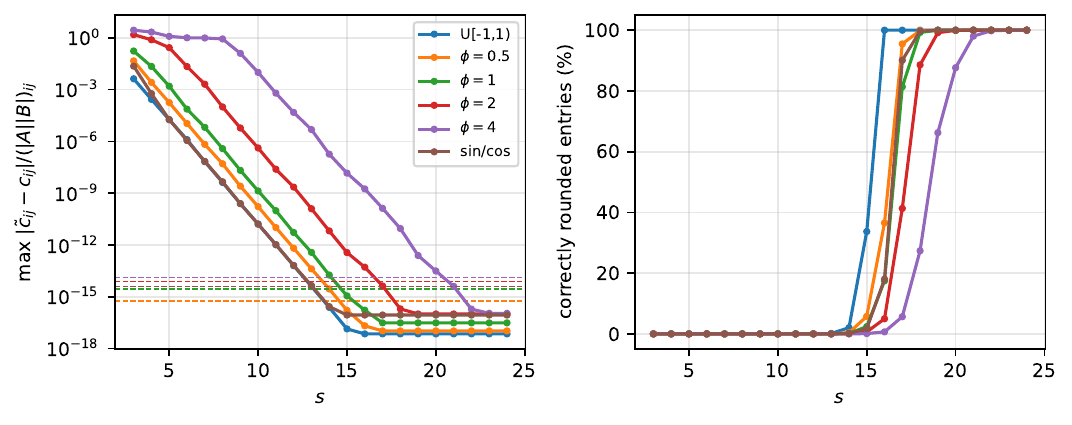}
\caption{Accuracy versus the number of moduli, $k=4096$. Left: componentwise error;
dashed lines show native DGEMM on the GPU. Right: fraction of correctly rounded
entries.}
\label{fig:accuracy}
\end{figure}

\begin{table}[tbp]
\centering
\caption{Accuracy against the double-double reference ($512\times512$, $k=4096$).}
\label{tab:accuracy}
\footnotesize
\begin{tabular}{llrrrrr}
\toprule
Data & Method & $t$ & max ulp & CR \% & max $\frac{|e|}{|A||B|}$ & $\frac{\|e\|_F}{\|C\|_F}$ \\
\midrule
U[-1,1) & DGEMM GPU & 53 & $3.0\cdot 10^{6}$ & 2.64 & $5.3\cdot 10^{-16}$ & $2.1\cdot 10^{-15}$ \\
U[-1,1) & DGEMM CPU & 53 & $5.0\cdot 10^{5}$ & 10.13 & $6.0\cdot 10^{-17}$ & $4.9\cdot 10^{-16}$ \\
U[-1,1) & Oz-II $s{=}12$ & 40 & $4.3\cdot 10^{8}$ & 0.01 & $6.3\cdot 10^{-14}$ & $6.4\cdot 10^{-13}$ \\
U[-1,1) & Oz-II $s{=}14$ & 48 & $2.1\cdot 10^{6}$ & 2.03 & $2.4\cdot 10^{-16}$ & $2.5\cdot 10^{-15}$ \\
U[-1,1) & Oz-II $s{=}16$ & 55 & 0.50 & 100.00 & $7.0\cdot 10^{-18}$ & $4.7\cdot 10^{-17}$ \\
U[-1,1) & Oz-II $s{=}18$ & 63 & 0.50 & 100.00 & $7.0\cdot 10^{-18}$ & $4.7\cdot 10^{-17}$ \\
U[-1,1) & Oz-II $s{=}20$ & 70 & 0.50 & 100.00 & $7.0\cdot 10^{-18}$ & $4.7\cdot 10^{-17}$ \\
U[-1,1) & Oz-II $s{=}24$ & 85 & 0.50 & 100.00 & $7.0\cdot 10^{-18}$ & $4.7\cdot 10^{-17}$ \\
\midrule
$\phi$=2 & DGEMM GPU & 53 & $5.5\cdot 10^{6}$ & 2.90 & $7.8\cdot 10^{-15}$ & $1.9\cdot 10^{-15}$ \\
$\phi$=2 & DGEMM CPU & 53 & $4.9\cdot 10^{6}$ & 11.50 & $1.7\cdot 10^{-15}$ & $5.0\cdot 10^{-16}$ \\
$\phi$=2 & Oz-II $s{=}12$ & 40 & $1.7\cdot 10^{11}$ & 0.00 & $2.2\cdot 10^{-9}$ & $3.1\cdot 10^{-11}$ \\
$\phi$=2 & Oz-II $s{=}14$ & 48 & $2.6\cdot 10^{9}$ & 0.04 & $6.5\cdot 10^{-12}$ & $1.2\cdot 10^{-13}$ \\
$\phi$=2 & Oz-II $s{=}16$ & 55 & $2.4\cdot 10^{6}$ & 4.98 & $5.1\cdot 10^{-14}$ & $9.2\cdot 10^{-16}$ \\
$\phi$=2 & Oz-II $s{=}18$ & 63 & 8871.00 & 88.58 & $2.0\cdot 10^{-16}$ & $4.3\cdot 10^{-17}$ \\
$\phi$=2 & Oz-II $s{=}20$ & 70 & 18.26 & 99.93 & $1.0\cdot 10^{-16}$ & $4.2\cdot 10^{-17}$ \\
$\phi$=2 & Oz-II $s{=}24$ & 85 & 0.50 & 100.00 & $1.0\cdot 10^{-16}$ & $4.2\cdot 10^{-17}$ \\
\midrule
sincos & DGEMM GPU & 53 & $1.7\cdot 10^{6}$ & 1.71 & $3.9\cdot 10^{-15}$ & $1.3\cdot 10^{-15}$ \\
sincos & DGEMM CPU & 53 & $2.3\cdot 10^{6}$ & 1.98 & $4.2\cdot 10^{-16}$ & $2.7\cdot 10^{-16}$ \\
sincos & Oz-II $s{=}12$ & 40 & $2.8\cdot 10^{10}$ & 0.00 & $6.5\cdot 10^{-14}$ & $4.1\cdot 10^{-13}$ \\
sincos & Oz-II $s{=}14$ & 48 & $6.1\cdot 10^{7}$ & 0.18 & $2.7\cdot 10^{-16}$ & $1.6\cdot 10^{-15}$ \\
sincos & Oz-II $s{=}16$ & 55 & $2.6\cdot 10^{5}$ & 18.01 & $8.7\cdot 10^{-17}$ & $3.9\cdot 10^{-17}$ \\
sincos & Oz-II $s{=}18$ & 63 & 58.52 & 99.70 & $8.7\cdot 10^{-17}$ & $3.9\cdot 10^{-17}$ \\
sincos & Oz-II $s{=}20$ & 70 & 0.50 & 100.00 & $8.7\cdot 10^{-17}$ & $3.9\cdot 10^{-17}$ \\
sincos & Oz-II $s{=}24$ & 85 & 0.50 & 100.00 & $8.7\cdot 10^{-17}$ & $3.9\cdot 10^{-17}$ \\
\bottomrule
\end{tabular} {}
\end{table}

\begin{table}[tbp]
\centering
\caption{Smallest $s$ that reaches (a) the componentwise accuracy of native GPU
DGEMM, (b) 99.9\% and (c) 100\% correctly rounded entries.}
\label{tab:mins}
\footnotesize
\begin{tabular}{lrrrrr}
\toprule
Data & $k$ & (a) $\approx$DGEMM & (b) 99.9\% CR & (c) 100\% CR & DGEMM GPU CR \% \\
\midrule
U[-1,1) & 256 & 14 & 15 & 15 & 10.6 \\
U[-1,1) & 4096 & 14 & 16 & 16 & 2.6 \\
U[-1,1) & 65536 & 14 & 16 & 16 & 0.6 \\
$\phi$=0.5 & 256 & 15 & 18 & 19 & 10.5 \\
$\phi$=0.5 & 4096 & 15 & 19 & 20 & 2.6 \\
$\phi$=0.5 & 65536 & 15 & 19 & 22 & 0.7 \\
$\phi$=1 & 256 & 15 & 18 & 20 & 11.0 \\
$\phi$=1 & 4096 & 15 & 19 & 21 & 2.7 \\
$\phi$=1 & 65536 & 16 & 20 & 22 & 0.7 \\
$\phi$=2 & 256 & 17 & 19 & 21 & 12.2 \\
$\phi$=2 & 4096 & 17 & 20 & 22 & 2.9 \\
$\phi$=2 & 65536 & 17 & 21 & 23 & 0.7 \\
$\phi$=4 & 256 & 21 & 22 & 24 & 13.7 \\
$\phi$=4 & 4096 & 21 & 23 & -- & 3.3 \\
$\phi$=4 & 65536 & 21 & 24 & -- & 0.8 \\
sincos & 256 & 14 & 18 & 19 & 7.7 \\
sincos & 4096 & 14 & 19 & 20 & 1.7 \\
sincos & 65536 & 13 & 20 & 21 & 0.4 \\
\bottomrule
\end{tabular} {}
\end{table}

\subsection{Worst cases}
\label{sec:worst}
Table~\ref{tab:worst} collects inputs designed to stress the method.

\emph{Boundary of the CRT range.} If all entries equal $1-2^{-53}$, they round to
$\hat a=2^t$ and give $|\hat c|=k2^{2t}=M/4$, the limit of Eq.~\eqref{eq:tbound}. The
results are exact for all $s$.

\emph{Spike.} One element per row of $A$ is $2^{40}$ times larger than the others.
The shared row exponent then leaves the remaining elements only $t-40$ bits, and at
$s=16$ the componentwise error is $10^{-10}$, four orders of magnitude worse than
DGEMM. Increasing $s$ to 20 restores DGEMM-level accuracy, and $s=24$ gives correct
rounding. This is the fundamental limitation of all shared-exponent schemes.

\emph{Cancellation.} With $A=[X\ X]$ and $B=[Y;\,-Y+2^{-30}Z]$ we obtain
$|C|\approx2^{-30}|A||B|$. Native DGEMM loses 30 bits (normwise relative error
$2\times10^{-6}$ on the GPU and $7\times10^{-7}$ on the CPU). Ozaki scheme~II with $s\ge20$, in contrast, represents the inputs
exactly, computes $\hat c$ exactly and rounds only once: the normwise error is
$9\times10^{-16}$ at $s=20$, and the result is correctly rounded at $s=24$.

\begin{table}[tbp]
\centering
\caption{Worst-case inputs ($512\times512$, $k=4096$).}
\label{tab:worst}
\footnotesize
\begin{tabular}{llrrrrr}
\toprule
Data & Method & $t$ & max ulp & CR \% & max $\frac{|e|}{|A||B|}$ & $\frac{\|e\|_F}{\|C\|_F}$ \\
\midrule
spike & DGEMM GPU & 53 & 86.57 & 3.57 & $1.8\cdot 10^{-14}$ & $2.2\cdot 10^{-15}$ \\
spike & DGEMM CPU & 53 & 16.76 & 15.54 & $3.1\cdot 10^{-15}$ & $5.0\cdot 10^{-16}$ \\
spike & Oz-II $s{=}12$ & 40 & $2.2\cdot 10^{10}$ & 0.00 & $3.9\cdot 10^{-6}$ & $2.2\cdot 10^{-11}$ \\
spike & Oz-II $s{=}16$ & 55 & $6.7\cdot 10^{5}$ & 8.05 & $1.3\cdot 10^{-10}$ & $6.8\cdot 10^{-16}$ \\
spike & Oz-II $s{=}20$ & 70 & 7.46 & 99.93 & $9.1\cdot 10^{-16}$ & $4.7\cdot 10^{-17}$ \\
spike & Oz-II $s{=}24$ & 85 & 0.50 & 100.00 & $1.1\cdot 10^{-16}$ & $4.7\cdot 10^{-17}$ \\
spike & Oz-I 36 GEMMs & 54 & $5.8\cdot 10^{6}$ & 1.30 & $1.0\cdot 10^{-9}$ & $4.1\cdot 10^{-15}$ \\
spike & Oz-I 45 GEMMs & 61 & 59480.00 & 66.32 & $1.1\cdot 10^{-11}$ & $6.0\cdot 10^{-17}$ \\
\midrule
cancel & DGEMM GPU & 53 & $5.6\cdot 10^{15}$ & 0.00 & $3.2\cdot 10^{-16}$ & $2.3\cdot 10^{-6}$ \\
cancel & DGEMM CPU & 53 & $8.5\cdot 10^{14}$ & 0.00 & $5.2\cdot 10^{-17}$ & $7.2\cdot 10^{-7}$ \\
cancel & Oz-II $s{=}12$ & 40 & $7.2\cdot 10^{17}$ & 0.00 & $4.1\cdot 10^{-14}$ & $6.9\cdot 10^{-4}$ \\
cancel & Oz-II $s{=}16$ & 55 & $2.1\cdot 10^{13}$ & 0.00 & $3.6\cdot 10^{-19}$ & $6.0\cdot 10^{-9}$ \\
cancel & Oz-II $s{=}20$ & 70 & $9.8\cdot 10^{5}$ & 99.42 & $4.2\cdot 10^{-25}$ & $9.4\cdot 10^{-16}$ \\
cancel & Oz-II $s{=}24$ & 85 & 0.50 & 100.00 & $5.5\cdot 10^{-27}$ & $4.7\cdot 10^{-17}$ \\
cancel & Oz-I 36 GEMMs & 54 & $3.2\cdot 10^{14}$ & 0.00 & $1.2\cdot 10^{-17}$ & $1.5\cdot 10^{-7}$ \\
cancel & Oz-I 45 GEMMs & 61 & $1.1\cdot 10^{12}$ & 0.00 & $8.2\cdot 10^{-20}$ & $1.2\cdot 10^{-9}$ \\
\bottomrule
\end{tabular} {}
\end{table}

\subsection{DGEMM performance}
Table~\ref{tab:perf} and Figure~\ref{fig:perf} report the emulated rate for square
problems. For $N\ge 2048$ the emulation outperforms native DGEMM for every $s$. At
$N=8192$ it reaches 17.3, 12.1 and 9.3\,TFLOP/s for $s=8$, 12 and 16, which is
22.8, 15.9 and 12.2 times the native DGEMM rate (0.76\,TFLOP/s). Problems up to $N=24576$ are tiled
automatically ($24576^3$ requires $\approx$9\,GB of workspace in addition to the
14.5\,GB occupied by $A$, $B$ and $C$) at an efficiency loss of 0--10\%. For
$N\le256$, kernel-launch overheads ($\approx$0.15\,ms) dominate, and native DGEMM is
faster. Non-square shapes (Table~\ref{tab:shapes}) behave as the operation count
predicts: a small inner dimension ($k=256$) leaves the memory-bound phases exposed
(2.4\,TFLOP/s at $s=8$), whereas a large one approaches the GEMM bound
(19.6\,TFLOP/s for $k=32768$). Transposed operands cost at most 7\%.

\begin{figure}[tbp]
\centering
\includegraphics[width=0.75\textwidth]{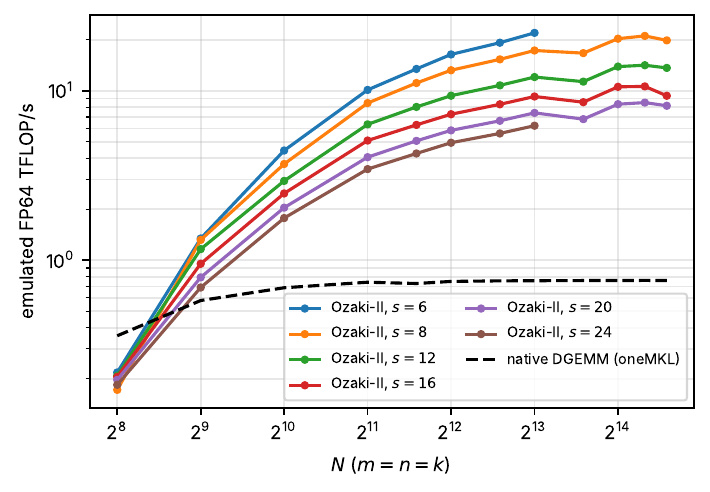}
\caption{Emulated FP64 rate for square DGEMM versus $N$.}
\label{fig:perf}
\end{figure}

\begin{table}[tbp]
\centering
\caption{Emulated DGEMM rate (TFLOP/s) for square matrices of order $N$.}
\label{tab:perf}
\footnotesize
\begin{tabular}{rrrrrrrr}
\toprule
$N$ & $s{=}4$ & $s{=}8$ & $s{=}12$ & $s{=}16$ & $s{=}20$ & $s{=}24$ & DGEMM GPU \\
\midrule
256 & 0.16 & 0.17 & 0.21 & 0.20 & 0.20 & 0.18 & 0.36 \\
512 & 1.28 & 1.32 & 1.16 & 0.95 & 0.79 & 0.69 & 0.58 \\
1024 & 4.82 & 3.70 & 2.94 & 2.48 & 2.04 & 1.77 & 0.69 \\
2048 & 9.33 & 8.47 & 6.34 & 5.09 & 4.06 & 3.45 & 0.74 \\
3072 & 17.11 & 11.12 & 8.04 & 6.30 & 5.07 & 4.27 & 0.73 \\
4096 & 21.14 & 13.23 & 9.36 & 7.28 & 5.84 & 4.94 & 0.75 \\
6144 & 25.84 & 15.34 & 10.77 & 8.33 & 6.66 & 5.61 & 0.76 \\
8192 & 29.90 & 17.30 & 12.06 & 9.26 & 7.41 & 6.23 & 0.76 \\
12288 & -- & 16.70 & 11.33 & 8.57 & 6.81 & -- & 0.76 \\
16384 & -- & 20.32 & 13.91 & 10.56 & 8.34 & -- & 0.76 \\
20480 & -- & 21.10 & 14.17 & 10.62 & 8.52 & -- & 0.76 \\
24576 & -- & 19.86 & 13.66 & 9.35 & 8.17 & -- & 0.76 \\
\bottomrule
\end{tabular} {}
\end{table}

\begin{table}[tbp]
\centering
\caption{Non-square shapes: DGEMM rate (TFLOP/s), native and emulated.}
\label{tab:shapes}
\footnotesize
\begin{tabular}{lrrr}
\toprule
$m\times k\times n$ & native DGEMM & $s{=}8$ & $s{=}16$ \\
\midrule
8192$\times$256$\times$8192 & 0.75 & 2.35 & 1.32 \\
8192$\times$1024$\times$8192 & 0.76 & 7.43 & 4.12 \\
8192$\times$32768$\times$8192 & 0.76 & 19.62 & 10.38 \\
8192$\times$65536$\times$8192 & 0.75 & 19.65 & 10.03 \\
65536$\times$4096$\times$1024 & 0.76 & 10.64 & 6.20 \\
1024$\times$4096$\times$65536 & 0.76 & 10.97 & 6.37 \\
4000$\times$8000$\times$4000 & 0.71 & 14.02 & 7.73 \\
16384$\times$2048$\times$16384 & 0.76 & 11.51 & 6.20 \\
\bottomrule
\end{tabular} {}
\end{table}

\paragraph{Phase breakdown.} Table~\ref{tab:phases} and Figure~\ref{fig:phases}
split the run time into scaling, splitting, INT8 GEMM and CRT. Unlike on the CPUs
studied in \cite{kouya2026ozaki2cpu}, where the residue conversion dominates, on the
B60 the INT8 GEMM accounts for 75--80\% of the time at $N=8192$ and 88--90\% at
$N=24576$, running at 186\,TOP/s, while the split and CRT kernels run at the memory
bandwidth. The performance model of Section~\ref{sec:model} reproduces the measured
times to within 10\% on average.

\begin{table}[tbp]
\centering
\caption{Phase breakdown (ms).}
\label{tab:phases}
\footnotesize
\begin{tabular}{rrrrrrrrl}
\toprule
$N$ & $s$ & scale & split & GEMM & TOP/s & CRT & GEMM \% & tiles \\
\midrule
2048 & 8 & 0.25 & 0.41 & 0.90 & 153 & 0.47 & 44 & 1x1 \\
2048 & 12 & 0.25 & 0.51 & 1.34 & 153 & 0.63 & 49 & 1x1 \\
2048 & 16 & 0.25 & 0.60 & 1.76 & 156 & 0.78 & 52 & 1x1 \\
2048 & 20 & 0.25 & 0.89 & 2.20 & 156 & 0.90 & 52 & 1x1 \\
2048 & 24 & 0.27 & 1.02 & 2.66 & 155 & 1.05 & 53 & 1x1 \\
\midrule
8192 & 8 & 2.64 & 6.29 & 47.31 & 186 & 7.21 & 75 & 1x1 \\
8192 & 12 & 2.64 & 7.86 & 70.97 & 186 & 9.52 & 78 & 1x1 \\
8192 & 16 & 2.68 & 9.40 & 94.63 & 186 & 11.92 & 80 & 1x1 \\
8192 & 20 & 2.66 & 13.71 & 118.09 & 186 & 13.77 & 80 & 1x1 \\
8192 & 24 & 2.70 & 15.83 & 141.78 & 186 & 16.23 & 80 & 1x1 \\
\midrule
24576 & 8 & 22.75 & 86.64 & 1332.98 & 178 & 65.33 & 88 & 3x2 \\
24576 & 12 & 22.73 & 108.26 & 1954.12 & 182 & 87.03 & 90 & 6x2 \\
24576 & 16 & 22.82 & 215.98 & 2836.96 & 167 & 108.34 & 89 & 3x4 \\
24576 & 20 & 22.82 & 248.67 & 3229.24 & 184 & 125.77 & 89 & 6x3 \\
\bottomrule
\end{tabular} {}
\end{table}

\begin{figure}[tbp]
\centering
\includegraphics[width=0.7\textwidth]{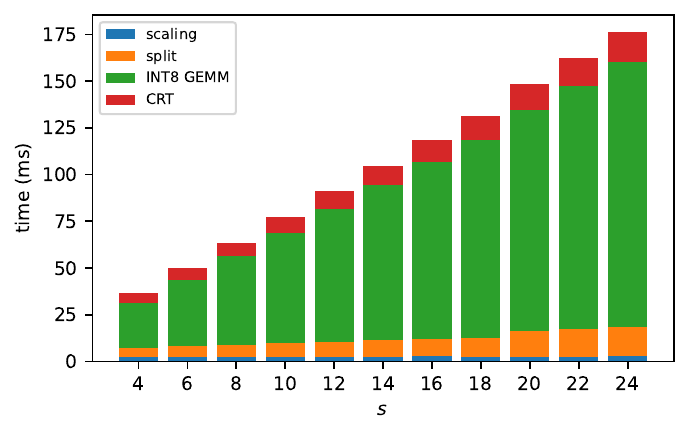}
\caption{Phase breakdown versus $s$ at $N=8192$.}
\label{fig:phases}
\end{figure}

\subsection{ZGEMM performance and accuracy}
Table~\ref{tab:zgemm} compares the 3M emulation with native oneMKL ZGEMM
(0.76\,TFLOP/s, counting $8mnk$ flops). Since the emulation spends $3s$ INT8 GEMMs
on four times the flops of a real product, the emulated complex rate is
$\frac43$ of the real one: 24.4, 16.8 and 12.8\,TFLOP/s at $N=8192$ for $s=8$, 12
and 16, and up to 27.4\,TFLOP/s at $N=16384$ ($s=8$). At $s=16$ this is 13.7--17.8
times native ZGEMM for $N\ge4096$, and the board energy efficiency is 70 versus
9.4\,GFLOP/s/W. The accuracy behaves as for DGEMM (Table~\ref{tab:zaccuracy}): the
results are correctly rounded from $s=16$ on for $\mathcal{U}[-1,1)$ and from $s=24$
for $\phi=2$ (99.9\% at $s=20$). In the cancellation case the normwise error drops
from $3\times10^{-6}$ for native ZGEMM to $9\times10^{-16}$ at $s=20$. As
expected, the exact 3M form loses no accuracy relative to the real case.

\begin{table}[tbp]
\centering
\caption{Emulated ZGEMM (3M) versus native ZGEMM; rates in TFLOP/s ($8mnk$).}
\label{tab:zgemm}
\footnotesize
\begin{tabular}{rrrrrrr}
\toprule
$N$ & $s{=}8$ & $s{=}12$ & $s{=}16$ & $s{=}20$ & ZGEMM GPU & speed-up ($s{=}16$) \\
\midrule
1024 & 6.72 & 5.21 & 4.19 & 3.49 & 0.75 & 5.6 \\
2048 & 13.06 & 9.54 & 7.45 & 6.06 & 0.75 & 9.9 \\
4096 & 19.28 & 13.47 & 10.33 & 8.42 & 0.75 & 13.7 \\
6144 & 21.81 & 15.13 & 11.56 & 9.33 & 0.76 & 15.2 \\
8192 & 24.41 & 16.75 & 12.76 & 10.26 & 0.76 & 16.7 \\
12288 & 22.96 & 15.47 & 11.52 & -- & 0.76 & 15.1 \\
16384 & 27.36 & 18.00 & 13.60 & -- & 0.76 & 17.8 \\
\bottomrule
\end{tabular} {}
\end{table}

\begin{table}[tbp]
\centering
\caption{ZGEMM accuracy ($512\times512$, $k=4096$).}
\label{tab:zaccuracy}
\footnotesize
\begin{tabular}{llrrrrr}
\toprule
Data & Method & $t$ & max ulp & CR \% & max $\frac{|e|}{|A||B|}$ & $\frac{\|e\|_F}{\|C\|_F}$ \\
\midrule
U[-1,1) & ZGEMM GPU & 53 & $5.2\cdot 10^{6}$ & 1.87 & $2.4\cdot 10^{-16}$ & $3.0\cdot 10^{-15}$ \\
U[-1,1) & ZGEMM CPU & 53 & $6.5\cdot 10^{5}$ & 10.01 & $2.3\cdot 10^{-17}$ & $4.9\cdot 10^{-16}$ \\
U[-1,1) & Oz-II 3M $s{=}12$ & 39 & $1.8\cdot 10^{9}$ & 0.00 & $4.6\cdot 10^{-14}$ & $1.3\cdot 10^{-12}$ \\
U[-1,1) & Oz-II 3M $s{=}14$ & 47 & $3.7\cdot 10^{6}$ & 0.99 & $1.7\cdot 10^{-16}$ & $5.0\cdot 10^{-15}$ \\
U[-1,1) & Oz-II 3M $s{=}16$ & 55 & 0.50 & 100.00 & $3.1\cdot 10^{-18}$ & $4.7\cdot 10^{-17}$ \\
U[-1,1) & Oz-II 3M $s{=}20$ & 70 & 0.50 & 100.00 & $3.1\cdot 10^{-18}$ & $4.7\cdot 10^{-17}$ \\
U[-1,1) & Oz-II 3M $s{=}24$ & 85 & 0.50 & 100.00 & $3.1\cdot 10^{-18}$ & $4.7\cdot 10^{-17}$ \\
\midrule
$\phi$=2 & ZGEMM GPU & 53 & $2.5\cdot 10^{6}$ & 2.05 & $9.4\cdot 10^{-15}$ & $2.9\cdot 10^{-15}$ \\
$\phi$=2 & ZGEMM CPU & 53 & $8.0\cdot 10^{5}$ & 11.45 & $1.7\cdot 10^{-15}$ & $4.8\cdot 10^{-16}$ \\
$\phi$=2 & Oz-II 3M $s{=}12$ & 39 & $3.1\cdot 10^{11}$ & 0.00 & $1.3\cdot 10^{-9}$ & $8.4\cdot 10^{-11}$ \\
$\phi$=2 & Oz-II 3M $s{=}14$ & 47 & $1.4\cdot 10^{9}$ & 0.02 & $4.3\cdot 10^{-12}$ & $3.3\cdot 10^{-13}$ \\
$\phi$=2 & Oz-II 3M $s{=}16$ & 55 & $3.3\cdot 10^{6}$ & 3.76 & $1.2\cdot 10^{-14}$ & $1.3\cdot 10^{-15}$ \\
$\phi$=2 & Oz-II 3M $s{=}20$ & 70 & 32.98 & 99.91 & $9.1\cdot 10^{-17}$ & $4.6\cdot 10^{-17}$ \\
$\phi$=2 & Oz-II 3M $s{=}24$ & 85 & 0.50 & 100.00 & $9.1\cdot 10^{-17}$ & $4.6\cdot 10^{-17}$ \\
\midrule
cancel & ZGEMM GPU & 53 & $1.6\cdot 10^{17}$ & 0.00 & $1.9\cdot 10^{-16}$ & $3.2\cdot 10^{-6}$ \\
cancel & ZGEMM CPU & 53 & $1.9\cdot 10^{16}$ & 0.00 & $1.8\cdot 10^{-17}$ & $7.3\cdot 10^{-7}$ \\
cancel & Oz-II 3M $s{=}12$ & 39 & $1.1\cdot 10^{19}$ & 0.00 & $3.2\cdot 10^{-14}$ & $1.4\cdot 10^{-3}$ \\
cancel & Oz-II 3M $s{=}14$ & 47 & $1.1\cdot 10^{17}$ & 0.00 & $1.3\cdot 10^{-16}$ & $5.4\cdot 10^{-6}$ \\
cancel & Oz-II 3M $s{=}16$ & 55 & $7.7\cdot 10^{13}$ & 0.00 & $1.5\cdot 10^{-19}$ & $6.0\cdot 10^{-9}$ \\
cancel & Oz-II 3M $s{=}20$ & 70 & $2.2\cdot 10^{5}$ & 99.22 & $1.0\cdot 10^{-25}$ & $8.6\cdot 10^{-16}$ \\
cancel & Oz-II 3M $s{=}24$ & 85 & 14.45 & 100.00 & $1.7\cdot 10^{-27}$ & $4.7\cdot 10^{-17}$ \\
\bottomrule
\end{tabular} {}
\end{table}

\subsection{Energy}
During the emulation the board draws 120--193\,W, compared with 80\,W during native
DGEMM, but the run time is an order of magnitude shorter. At $N=8192$ the emulation
delivers 96 ($s=8$) and 50 ($s=16$) GFLOP/s per watt, against 9.5 for native DGEMM;
for $N\ge4096$ it thus needs 4.4--12 times less energy per product
(Table~\ref{tab:energy}). Per operation, an INT8 multiply--add pair costs 1.0\,pJ
on this device and an FP64 pair 100\,pJ (Table~\ref{tab:engines}).

\begin{table}[tbp]
\centering
\caption{Energy efficiency (board energy counter).}
\label{tab:energy}
\footnotesize
\begin{tabular}{rrrrrrr}
\toprule
$N$ & $s$ & W & J/call & GFLOP/s/W & DGEMM GPU & ratio \\
\midrule
1024 & 8 & 120 & 0.070 & 30.7 & 9.4 & 3.3 \\
1024 & 16 & 133 & 0.117 & 18.4 & 9.4 & 2.0 \\
2048 & 8 & 153 & 0.314 & 54.7 & 9.5 & 5.8 \\
2048 & 16 & 161 & 0.548 & 31.3 & 9.5 & 3.3 \\
4096 & 8 & 168 & 1.747 & 78.7 & 9.6 & 8.2 \\
4096 & 16 & 172 & 3.259 & 42.2 & 9.6 & 4.4 \\
8192 & 8 & 180 & 11.460 & 96.0 & 9.5 & 10.1 \\
8192 & 16 & 185 & 21.950 & 50.1 & 9.5 & 5.3 \\
16384 & 8 & 188 & 81.510 & 107.9 & 9.0 & 12.0 \\
16384 & 16 & 193 & 161.100 & 54.6 & 9.0 & 6.1 \\
24576 & 8 & 188 & 283.800 & 104.6 & 9.1 & 11.5 \\
24576 & 16 & 182 & 580.400 & 51.1 & 9.1 & 5.6 \\
\bottomrule
\end{tabular} {}
\end{table}

\subsection{Comparison with Ozaki scheme I on the same INT8 engine}
To quantify the benefit of scheme~II on this hardware, we implemented the INT8
Ozaki scheme~I \cite{ootomo2024dgemm,uchino2025ozaki1int8} with the same scaling,
the same INT8 GEMM and 7-bit balanced slices ($L$ slices, $L(L+1)/2$ products,
diagonal sums accumulated in double-double). At comparable accuracy, scheme~I needs
36--45 INT8 GEMMs where scheme~II needs 15--16, and it is 2.3--2.9 times slower
(Table~\ref{tab:ozaki1}, Figure~\ref{fig:oz1}). Because of its truncation it is also
less accurate than scheme~II at an equal number of GEMMs. We found no regime in
which scheme~I is preferable on the B60 and therefore include it only as a point of comparison.

\begin{table}[tbp]
\centering
\caption{Ozaki scheme I versus scheme II, $N=8192$, $\mathcal{U}[-1,1)$.}
\label{tab:ozaki1}
\footnotesize
\begin{tabular}{llrrrrrr}
\toprule
scheme & param. & bits & GEMMs & ms & TFLOP/s & max $\frac{|e|}{|A||B|}$ & CR \% \\
\midrule
Ozaki-I & $L{=}6$ & 40 & 21 & 168.8 & 6.51 & $2.2\cdot 10^{-13}$ & 0.0 \\
Ozaki-I & $L{=}7$ & 47 & 28 & 218.9 & 5.02 & $1.7\cdot 10^{-15}$ & 0.3 \\
Ozaki-I & $L{=}8$ & 54 & 36 & 276.0 & 3.98 & $1.7\cdot 10^{-17}$ & 29.6 \\
Ozaki-I & $L{=}9$ & 61 & 45 & 339.1 & 3.24 & $7.0\cdot 10^{-18}$ & 97.4 \\
Ozaki-II & $s{=}12$ & 39 & 12 & 91.2 & 12.06 & $6.3\cdot 10^{-14}$ & 0.0 \\
Ozaki-II & $s{=}14$ & 47 & 14 & 105.0 & 10.47 & $2.4\cdot 10^{-16}$ & 2.0 \\
Ozaki-II & $s{=}16$ & 55 & 16 & 118.7 & 9.26 & $7.0\cdot 10^{-18}$ & 100.0 \\
Ozaki-II & $s{=}18$ & 62 & 18 & 131.5 & 8.36 & $7.0\cdot 10^{-18}$ & 100.0 \\
\bottomrule
\end{tabular} {}
\end{table}

\begin{figure}[tbp]
\centering
\includegraphics[width=0.6\textwidth]{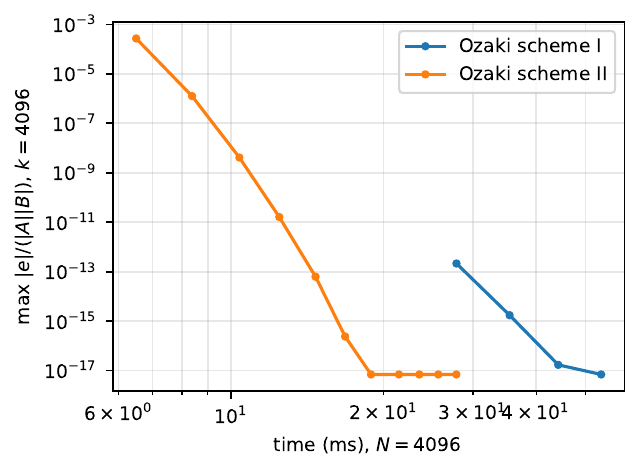}
\caption{Time ($N=4096$) versus componentwise error ($k=4096$) for both schemes.}
\label{fig:oz1}
\end{figure}

\subsection{Effect of the pipeline-level decisions}
Table~\ref{tab:ablation} disables the pipeline-level decisions of
Section~\ref{sec:arc} one at a time. Issuing the $s$ INT8 products as one batched
call saves 6--9\% at $N=2048$ and $N=24576$. The preference for power-of-two tiles
matters when tiling is active: at $N=24576$ and $s=16$, tiles with a
non-power-of-two row count cost 6\%. The padding of the leading dimension changes
these timings by at most 1\%, because the tilings chosen here rarely produce the
critical leading dimensions; its effect on the INT8 GEMM itself is shown in
Table~\ref{tab:lda}. At $N=8192$, where the problem fits into a single tile with
favourable dimensions, none of the decisions changes the run time.

\begin{table}[tbp]
\centering
\caption{Ablation of the pipeline-level decisions: run time of the full pipeline
and slowdown when one decision is disabled.}
\label{tab:ablation}
\footnotesize
\begin{tabular}{rrrrrr}
\toprule
 & & full pipeline & \multicolumn{3}{c}{slowdown without} \\
\cmidrule(l){4-6}
$N$ & $s$ & (ms) & batched GEMM & lda padding & $2^j$ tiles \\
\midrule
2048 & 8 & 2.0 & +6\% & 0 & 0 \\
2048 & 16 & 3.4 & +9\% & 0 & 0 \\
8192 & 8 & 63.6 & 0 & 0 & 0 \\
8192 & 16 & 118.7 & 0 & 0 & 0 \\
24576 & 8 & 1492.3 & +9\% & +1\% & 0 \\
24576 & 16 & 3200.3 & +7\% & 0 & +6\% \\
\bottomrule
\end{tabular} {}
\end{table}

\section{Discussion}
\label{sec:discussion}

\subsection{Using the library as a BLAS replacement}
The interface mirrors \texttt{dgemm} and \texttt{zgemm} of the oneMKL SYCL API with
a single extra argument, the number of moduli. We recommend the following policy.
By default we take $s=\texttt{moduli\_for\_bits}(k,53)$, i.e.\ $s=15$--$16$
for $k\le 2^{16}$; this is as accurate as DGEMM for well-scaled data and correctly
rounded for $\mathcal{U}[-1,1)$. Inputs with wide exponent ranges, or computations in which
cancellation is expected, call for $s=20$--$24$, while $s=8$--$12$ suffices where
an accuracy between FP32 and FP64 is acceptable. Because all accumulation is
performed in exact integer arithmetic, the results are deterministic and
independent of the tiling. Existing applications currently have to call the
library explicitly; a layer that transparently redirects their oneMKL calls is
left for future work.

\subsection{Performance model and outlook to Xe3/Xe3P}
\label{sec:model}
The measured run time is well described by
\begin{equation}
  T\approx\frac{2mnk\,s\,g}{R_8}+\frac{g\big[(16+s)(mk+kn)+(4s+8)mn\big]}{\beta_{\mathrm{mem}}}+n_{\mathrm{launch}}\tau,
\end{equation}
where $g=1$ for DGEMM and $g=3$ for ZGEMM, $R_8$ is the INT8 GEMM rate,
$\beta_{\mathrm{mem}}$ the memory bandwidth and $\tau\approx15\,\mu$s the launch
overhead. The three terms correspond to the three kinds of work in the pipeline.
The first is the time of the $s$ INT8 products of $2mnk$ operations each. The
second is the memory traffic of the remaining kernels, in bytes: the scaling and
split kernels read every element of $A$ and $B$ twice as FP64 (16 bytes) and write
$s$ residue bytes, and the CRT kernel reads $s$ INT32 products (4$s$ bytes) and
writes one FP64 result (8 bytes) per entry of $C$. The third counts the kernel
launches. With the B60 values of Table~\ref{tab:engines}, the model matches the
measurements to within 10\% on average and to within 1--8\% for
$4096\le N\le 8192$ (Table~\ref{tab:model}). It underestimates the time for small
problems, where the INT8 GEMM runs below its asymptotic rate, and for $N=12288$,
where oneMKL is slower for this shape. The fraction of time spent in INT8 GEMM is
approximately $\big(1+R_8[2(16+s)+(4s+8)]/(2sN\beta_{\mathrm{mem}})\big)^{-1}$.
Devices that raise $R_8$ faster than $\beta_{\mathrm{mem}}$ therefore shift the
balance towards the memory-bound phases, which our split and CRT kernels already
run at full bandwidth. For the D270 (Crescent Island), only the FP8 rate
(1.3\,PFLOP/s) and a memory bandwidth of $\approx$1.5\,TB/s have been reported
\cite{sth_d_series}. A quantitative projection requires the INT8 GEMM rate, which
has not yet been published; we therefore defer it.

Our design does not depend on INT8 specifically. Should future Intel GPUs favour
FP8 over INT8 in their matrix engines, as NVIDIA's Blackwell Ultra already does
\cite{nvidia_hgx}, the FP8 variant of Ozaki scheme~II \cite{uchino2026fp8} can be
mapped onto the same device-resident pipeline, with only the moduli and the
residue format changed.

\begin{table}[tbp]
\centering
\caption{Performance model versus measurement (single tile).}
\label{tab:model}
\footnotesize
\begin{tabular}{rrrrr}
\toprule
$N$ & $s$ & measured (ms) & model (ms) & deviation \\
\midrule
1024 & 8 & 0.58 & 0.46 & -21\% \\
1024 & 16 & 0.87 & 0.68 & -22\% \\
1024 & 24 & 1.21 & 0.90 & -26\% \\
2048 & 8 & 2.03 & 1.81 & -11\% \\
2048 & 16 & 3.38 & 3.05 & -10\% \\
2048 & 24 & 4.98 & 4.30 & -14\% \\
3072 & 8 & 5.21 & 4.73 & -9\% \\
3072 & 16 & 9.21 & 8.37 & -9\% \\
3072 & 24 & 13.58 & 12.01 & -12\% \\
4096 & 8 & 10.39 & 9.78 & -6\% \\
4096 & 16 & 18.89 & 17.73 & -6\% \\
4096 & 24 & 27.82 & 25.69 & -8\% \\
6144 & 8 & 30.23 & 28.50 & -6\% \\
6144 & 16 & 55.72 & 53.07 & -5\% \\
6144 & 24 & 82.74 & 77.64 & -6\% \\
8192 & 8 & 63.54 & 62.42 & -2\% \\
8192 & 16 & 118.73 & 117.95 & -1\% \\
8192 & 24 & 176.58 & 173.48 & -2\% \\
12288 & 8 & 222.23 & 193.62 & -13\% \\
12288 & 16 & 432.89 & 371.91 & -14\% \\
16384 & 8 & 432.86 & 438.95 & +1\% \\
\bottomrule
\end{tabular} {}
\end{table}

\subsection{Limitations}
\begin{itemize}
\item Exponents are shared per row and column, so elements more than $t-53$ binades
  below the row maximum lose bits (Section~\ref{sec:worst}); a larger $s$
  mitigates this.
\item The number of moduli is limited to $s\le24$ ($\log_2M<192$); $k>131071$ is
  handled by splitting $k$.
\item An Inf or NaN in a row of $\mathrm{op}(A)$ or a column of $\mathrm{op}(B)$
  produces NaN in the entire corresponding row or column of $C$, and subnormal
  results may be double-rounded.
\item The workspace grows as $s(mk+kn+4mn)$ bytes ($\times3$ for ZGEMM). Large
  problems are tiled, at the cost of re-splitting parts of $A$.
\item Small problems ($N\lesssim 256$) are dominated by launch overhead.
\end{itemize}

\section{Conclusions}
Ozaki scheme~II turns the INT8 matrix engines of an affordable Intel Arc Pro B60
into a DGEMM/ZGEMM engine that is an order of magnitude faster and about five times
more energy-efficient than the native FP64 path. With enough moduli it is also more
accurate: it rounds correctly on well-scaled data, where native DGEMM rounds only a
few percent of the entries correctly, and it is orders of magnitude more accurate
under cancellation. The same pipeline provides ZGEMM through an exact 3M scheme
with $3s$ instead of $4s$ INT8 products. On Xe2 the decisive implementation choices
were to keep FP64 out of the memory-bound kernels, to use an exact, division-free
CRT with an FP32 quotient, and to adapt the tiling to the behaviour of the vendor
INT8 GEMM. Future work includes measurements on Xe3P hardware, transparent
interposition into existing electronic-structure codes, batched and small-matrix
variants, and the reuse of the residues of $A$ across calls.

\section*{Code and data availability}
\label{sec:repro}
The library is available at \url{https://github.com/MacraMacra/OS2-Xe} under the
MIT license. It is header-only and consists of \texttt{ozaki2\_gpu.hpp} and the
table of moduli \texttt{coprime\_tables.hpp}; the repository also contains the
correctness tests, usage examples, and the throughput and accuracy benchmarks.

\section*{Acknowledgements}
This work was supported by Latvian Council of Science via Latvian State Research
Programme Nr.~VPPIZM-CERN-2022/1-0001.

\bibliographystyle{unsrt}

\end{document}